\newtheorem{theorem}{Theorem}
\newtheorem{lemma}{Lemma}
\newtheorem{definition}{Definition}
\newcommand{\comp}{\mathcal{C}}
\newcommand{\framework}{WinC-MAPF}
\DeclareMathOperator*{\argmin}{arg\,min}
\title{Windowed MAPF with Completeness Guarantees}
\author{
    Rishi Veerapaneni, Muhammad Suhail Saleem, Jiaoyang Li, Maxim Likhachev
}
\begin{document}

\maketitle

\begin{abstract}
Traditional multi-agent path finding (MAPF) methods try to compute entire collision free start-goal paths, with several algorithms offering completeness guarantees. However, computing partial paths offers significant advantages including faster planning, adaptability to changes, and enabling decentralized planning. 
Methods that compute partial paths employ a ``windowed" approach and only try to find collision free paths for a limited timestep horizon. While this improves flexibility, this adaptation introduces incompleteness; all existing windowed approaches can become stuck in deadlock or livelock. 
Our main contribution is to introduce our framework, \framework, for Windowed MAPF that enables completeness. Our framework leverages heuristic update insights from single-agent real-time heuristic search algorithms and agent independence ideas from MAPF algorithms. We also develop Single-Step Conflict Based Search (SS-CBS), an instantiation of this framework using a novel modification to CBS.
We show how SS-CBS, which only plans a single step and updates heuristics, can effectively solve tough scenarios where existing windowed approaches fail.
\end{abstract}


\section{Introduction}
A core problem for multi-agent systems is to figure out how agents should move from their current locations to their goal locations. Without careful consideration, agents can collide, get stuck in deadlock, or take inefficient paths which take longer to traverse. This Multi-Agent Path Finding (MAPF) problem is particularly tough in congestion or when the number of agents becomes very large (e.g. 100s).

Full horizon MAPF methods attempt to find entire paths for each agent from their start to their goal. Many such MAPF methods additionally have theoretical completeness guarantees, i.e., they will find a solution if one exists given enough time and compute. 
However, planning partial paths has multiple advantages including decreasing planning time, adaptability to changes, and enabling decentralized planning. This is particularly useful in scenarios where planning a full horizon path may be tough (e.g. too many collisions to resolve) and planning a partial path is more feasible.


\begin{table}[t!]
\resizebox{0.99\linewidth}{!}{
\begin{tabular}{c|c||cc|cc|cc}
\multicolumn{1}{c}{} & \multicolumn{1}{c}{} & \multicolumn{2}{c}{\includegraphics[width=0.2\linewidth]{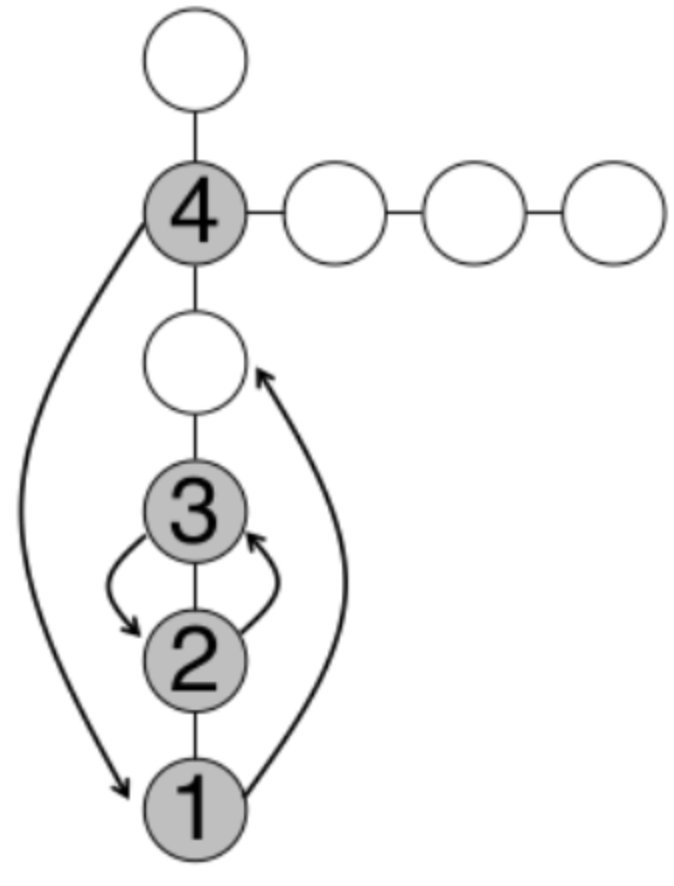}} &  \multicolumn{2}{c}{\includegraphics[width=0.2\linewidth]{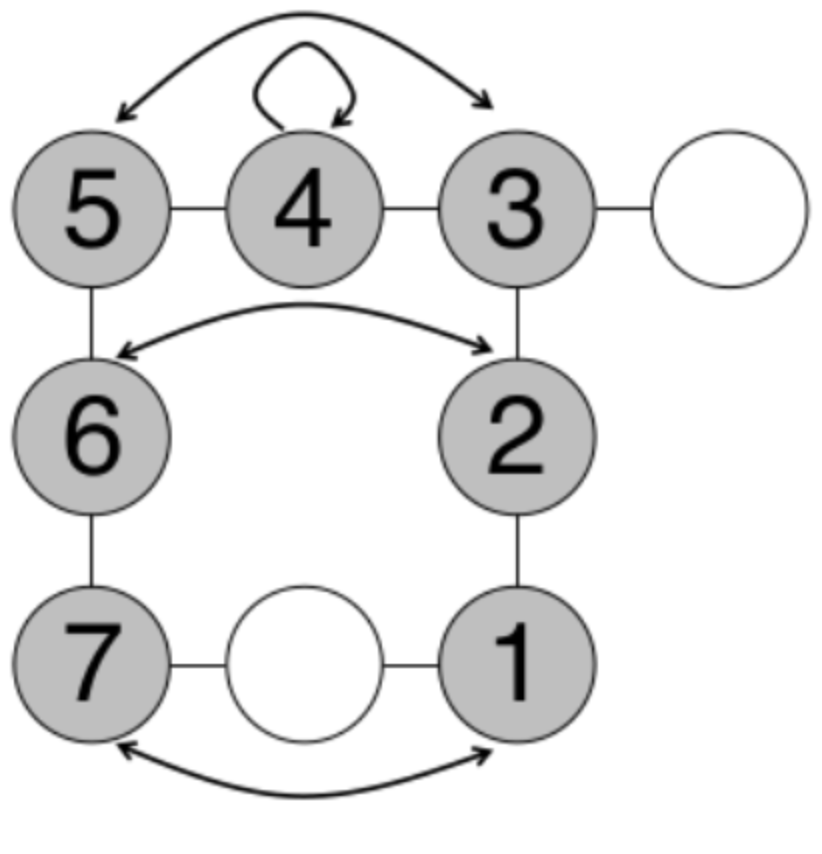}} &  \multicolumn{2}{c}{\includegraphics[width=0.2\linewidth]{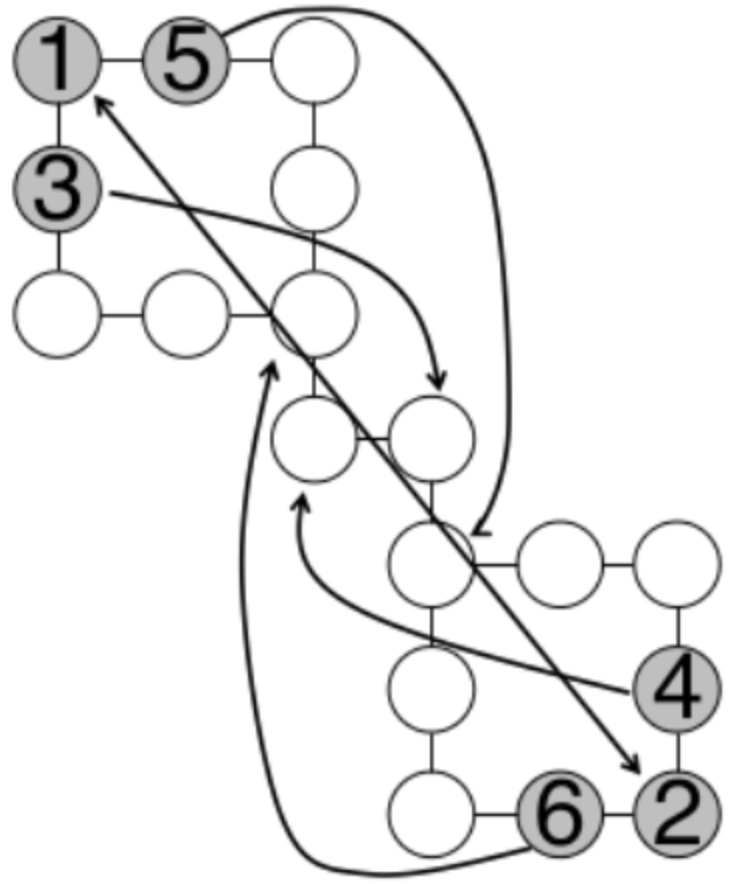}} \\
 &  & \multicolumn{2}{c|}{Tunnel} & \multicolumn{2}{c|}{Loopchain} & \multicolumn{2}{c}{Connector} \\
\multirow{-2}{*}{Method} & \multirow{-2}{*}{Horizon} & 3 & 4 & 6 & 7 & 5 & 6 \\ \hline
wCBS & 1,2,4,8,16 & \cellcolor[HTML]{FFCCC9}- & \cellcolor[HTML]{FFCCC9}- & \cellcolor[HTML]{FFCCC9}- & \cellcolor[HTML]{FFCCC9}- & \cellcolor[HTML]{FFCCC9}- & \cellcolor[HTML]{FFCCC9}-\\
CBS+ & $\infty$ & \cellcolor[HTML]{9AFF99}0.9 & \cellcolor[HTML]{FFCCC9}- & \cellcolor[HTML]{FFCCC9}- & \cellcolor[HTML]{FFCCC9}- & \cellcolor[HTML]{FFCCC9}- & \cellcolor[HTML]{FFCCC9}- \\ \hline
SS-CBS & 1 & \cellcolor[HTML]{9AFF99}1 & \cellcolor[HTML]{9AFF99}1 & \cellcolor[HTML]{9AFF99}1 & \cellcolor[HTML]{9AFF99}0.95 & \cellcolor[HTML]{9AFF99}1 & \cellcolor[HTML]{9AFF99}1 \\
\end{tabular}}
\caption{Success rates of different optimal approaches: windowed CBS (wCBS) and full horizon CBS with all optimizations enabled (CBS+). Horizon denotes window size. We test on small congested scenarios (number of agents written below each map) across 20 seeds and a 1 minute timeout. wCBS and CBS+ fail due to deadlock or timeout while
SS-CBS, with a \textit{single-step planning horizon}, is able to solve these scenarios using our \framework{} framework which maintains completeness guarantees.}
\label{tab:sneak-peak}
\vspace{-1em}
\end{table}

Therefore existing windowed methods define a time window $W$ and plan paths for each agent to the goal such that the first $W$ timesteps account for inter-agent coordination and avoids collisions. This window $W$ is typically much shorter than the entire solution path, e.g. $W=5$ is common when the entire solution path spans 50 to 500 timesteps. As a result, windowed methods are significantly faster than those that compute the entire path.

A key issue with windowed approaches is that their myopic planning results in deadlock or livelock. 
Table \ref{tab:sneak-peak} shows examples where windowed MAPF solvers fail in congestion if their window is too small.
More broadly, all existing windowed MAPF solvers regardless of window size lack theoretical completeness, and several windowed works have explicitly cited deadlock as a key issue in their experiments \cite{cooperativeSilver2005,rhcrLi2020,pibt,jiang2024scaling_mapf_competition}. Interestingly, no previous work has addressed completeness in windowed MAPF solvers.

Our first main contribution is the introduction of the \underline{Win}dowed \underline{C}omplete MAPF framework, \framework, designed to create Windowed MAPF solvers that guarantee completeness. 
\framework{} is a general framework that leverages concepts from single-agent real-time heuristic search and the semi-independent structure of MAPF. First, we view Windowed MAPF in its joint configuration and show how we can apply real-time heuristic updates on the joint configuration to enable completeness. However, due to the large joint configuration space in a MAPF problem, naively applying heuristic updates is impractical. Thus, second, we leverage the semi-independent structure of MAPF problems to focus heuristic updates on disjoint agent groups simultaneously, resulting in efficient performance. 
An important module in the \framework{} framework is an Action Generator (AG) that computes the next set of actions the agents need to execute. To guarantee completeness, the AG must (1) find the optimal windowed action that incorporates heuristic updates, and (2)  identify disjoint agent groups. Our framework and proof encourage future research on the development of windowed AGs that satisfy these properties. 

As a starting point, our second main contribution is developing Single-Step Conflict-Based Search (SS-CBS), a CBS-based AG that follows the \framework{} framework and plans only for a single timestep ($W=1$). Conflict-Based Search \cite{sharon2015cbs} can be easily modified to be windowed by only considering conflicts within the window, but we show how naively integrating heuristic updates can fail. Thus, SS-CBS introduces a novel ``heuristic conflict" and constraint to address this issue. We empirically demonstrate how SS-CBS, with \textit{single} step planning, outperforms windowed CBS with larger windows across both small and large instances.

\section{Related Work}

\subsection{Problem Formulation}
Multi-Agent Path Finding (MAPF) is the problem of finding collision-free paths for a group of $N$ agents ${i = 1, \dots, N}$, that takes each agent from its start location $s_i^{\text{start}}$ to its goal location $s_i^{\text{goal}}$. In traditional 2D MAPF, the environment is discretized into grid cells, and time is broken down into discrete timesteps. Agents are allowed to move in any cardinal direction or wait in the same cell. A valid solution is a set of agent paths $\Pi = \{ \pi_1, ..., \pi_N \}$ where $\pi_i[0] = s_i^{start}$, $\pi_i[T_i] = s_i^{goal}$ where $T_i$ is the maximum timestep of the path for agent $i$.
Critically, agents must avoid vertex collisions (when $\pi_i[t] = \pi_{j \neq i}[t]$) and edge collisions (when $\pi_i[t] = \pi_j[t+1] \wedge \pi_i[t+1]=\pi_j[t]$) for all timestep $t$. 
The typical objective in optimal MAPF is to find a solution $\Pi$ that minimizes $|\Pi| = \sum_{i=1}^N |\pi_i| = \sum_{i=1}^N \sum_{t=0}^{T_i-1} c(s_i^t,s_i^{t+1})$. 
This work solves standard MAPF which has $c(s_i^t,s_i^{t+1})=1$ unless the agent is resting at its goal (where $c(s_i^{\text{goal}},s_i^{\text{goal}})=0$).

\subsubsection{Windowed MAPF}
Instead of resolving all collisions, windowed planners iteratively plan a smaller collision-free path and execute, in essence breaking the problem into smaller more feasible chunks.
For example, a recent MAPF competition, League of Robot Runners \cite{chan2024_league_robot_runners_competition}, required planning for hundreds of agents within 1 second and utilized windowed planning interleaved with execution. 

Windowed MAPF methods plan partial paths that only reason about collisions for timesteps $t \leq W$ where $W$ is a hyper-parameter window size. Thus after the first $W$ timesteps, the remaining path $\pi_i[W] ... \pi_i[T_i]$ is the agent's optimal path to the goal (in the absence of other agents) as it does not need to avoid collisions with other agents.
Mathematically then, the cost of $\pi_i$ is $\sum_{t=0}^{W-1} c(s_i^t,s_i^{t+1}) + c^*(s_i^W,s_i^{goal})$ where $c^*(s_i^W,s_i^{goal})$ is the optimal cost to the goal.
We note that all performant 2D MAPF methods compute a backward dijkstra's for each agent where $h^*_i(s) = c^*(s,s_i^{goal})$. Thus instead of fully planning $\pi_i^{0:T_i}$, we can equivalently plan just the windowed horizon $\pi_i^{0:W}$ and minimize $|\Pi| = \sum_{i=1}^N ( \sum_{t=0}^{W-1} c(s_i^t,s_i^{t+1}) + h^*_i(s_i^W))$.

\subsection{MAPF Methods}
There exist many different types of heuristic search solvers for MAPF. 
One old approach is Prioritized Planning \cite{erdmann1987multiple} which assigns priorities to agents and plans them sequentially with later agents avoiding earlier agents. PIBT \cite{pibt} is a recent popular method that allows agents to ``inherit" other agents' priorities. Conflict Based Search \cite{sharon2015cbs} is another popular method that decoupled the planning problem into two stages. A high-level search resolves collisions between agents by applying constraints while a low-level search finds paths for individual agents that satisfy constraints. There are many extensions to CBS that improve the searches as well as the applied constraints \cite{barer2014suboptimalecbs,li2021eecbs,srli2021}.

When faced with shorter planning times, methods typically simplify the planning problem to just find partial collision-free paths. Windowed Hierarchical Cooperative A* \cite{cooperativeSilver2005} is a windowed variant of Hierarchical Cooperative A* which is essentially a prioritized planner using a backward Dijkstra's heuristic, and is not complete due to their use of priorities. Rolling Horizon Conflict Resolution (RHCR) applies a rolling horizon for lifelong MAPF planning and replans paths at repeated intervals \cite{rhcrLi2020}. RHCR faces deadlock and attempts to combat it by increasing the planning window but still notes that their method is incomplete. Bounded Multi-Agent A* \cite{mapf-real-time-bmaa-2018} proposes that each agent runs its own limited horizon real-time planner considering other agents as dynamic obstacles. However, the method acknowledges that deadlock occurs when agents need to coordinate with one another.

Planning and Improving while Executing \cite{zhang2024pie} is a recent work that attempts to quickly generate an initial full plan 
using LaCAM \cite{okumura2022lacam} and then refines it during execution using LNS2 \cite{li2022mapf-lns2}. 
However, if a complete plan cannot be found, the method resorts to using the best partial path available, making it incomplete in such situations. The winning submission \cite{jiang2024scaling_mapf_competition} to the Robot Runners competition, due to the tight planning time constraint, leveraged windowed planning of PIBT with LNS \cite{li2021mapf-lns}. They explicitly note deadlock in congestion is a significant challenge.

To the best of our knowledge, there does not exist any windowed MAPF solver with completeness guarantees.

\subsection{Real-Time Single Agent Search} \label{sec:real-time-search-backgroun}
We leverage ideas from ``Real-Time" Search, a single-agent heuristic search problem where, due to limited time constraints, the agent is required to iteratively plan and execute partial paths. Despite repeatedly planning partial paths, Real-Time Search methods are designed to maintain completeness. The main innovation in single-agent real-time search literature is that the agent updates (increases) the heuristic value of encountered states. This prevents deadlock/livelock as states that are repeatedly visited have larger and larger heuristic values which encourages the search to explore other areas. A large variety of real-time algorithms such as LRTA* \cite{korf1990_lrta}, RTAA* \cite{koenig2006_rtaa}, and LSS-LRTA* \cite{koenig2009_lss_lrta} propose to update the heuristic in different ways. 

Section \ref{sec:lrta-proof} contains a formal proof of how completeness guarantees hold in Real-Time Search. 
The core idea is that given a current state $s$, cost function $c$, and a partial path leading to a new state $s^W$, we update the heuristic value via a standard bellman update equation $h(s) \gets U(s,s^W):=\max(h(s), c(s,s^W) + h(s^W))$.
Now if an agent does not reach the goal, it must be stuck in deadlock/livelock and repeatedly visiting some states $s \in \mathcal{S}_{stuck}$. Under certain (achievable) conditions, updating the heuristic can be shown to repeatedly increase $h(s)$ for $s \in \mathcal{S}_{stuck}$ \cite{korf1990_lrta}.  
At some point then, if the agent is optimally planning to the state $s^W$ that minimizes $c(s,s^W) + h(s^W)$, it will eventually pick an $s^W \notin \mathcal{S}_{stuck}$. Thus most single-agent Real-Time Search algorithms require an optimal planner to maintain completeness.
 
We note that in the above proof sketch, the planning window $W$ does not matter, nor does how many steps in the partial plan the agent chooses to move (e.g. committing to only one step or executing all $W$ steps). Completeness holds regardless if $W \geq 1$ and the agent moves at least one step.


\section{Windowed-MAPF with Guarantees} 
This section describes our \underline{Win}dowed \underline{C}omplete MAPF framework, \framework, for creating windowed MAPF solvers that guarantee completeness. We leverage two key insights.
Our first insight is that we can apply single-agent real-time update ideas to MAPF planning if we interpret the MAPF problem as a single-agent problem in the combined joint configuration space. This allows us to update the heuristics of previously seen states enabling completeness. However, just doing this is ineffective due to the large state space. To this extent, our second insight is that we can leverage MAPF's agent semi-independence to intelligently update the heuristic value of multiple states, allowing the search to fill in heuristic depressions quickly and exit local minima faster.

\begin{figure*}[t!]
    \centering
    \includegraphics[width=0.7\textwidth]{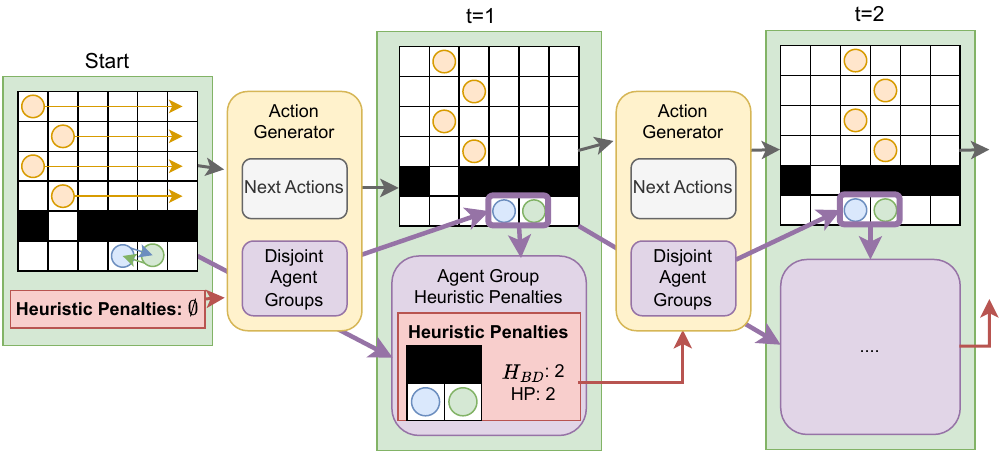}
    \caption{The iterative planning, execution, and grouped heuristic updates/penalties for \framework. }
    \label{fig:action-generator-framework}
    \vspace*{-1em}
\end{figure*}

\subsection{Planning in Joint Configuration Space} \label{sec:planning-configuration-space}
We can leverage single-agent Real-Time Search literature (Section \ref{sec:real-time-search-backgroun}) if we view our multi-agent problem in the joint configuration space of all agents. 

Here is a quick conceptual but mathematically imprecise summary: We view the $N$ agents in their joint configuration $\comp$. At every timestep, we query the action generator to return a valid sequence of actions that leads to a configuration $\comp'$ which minimizes $c(\comp, \comp') + h(\comp')$. We move the agent along the path as well as update the heuristic of $\comp$ via $h(\comp) \gets c(\comp, \comp') + h(\comp')$. We can then prove completeness by directly applying single agent real-time search literature.

We now formally define the above.
Given $N$ agents, we define the configuration $\comp^t = [s_1^t, ... s_N^t]$. At every planning iteration, we query a high-level ``action generator" to return a sequence of configurations $\Pi^{0:W} = [\comp^0, ..., \comp^W]$ which minimizes $|\Pi| = |\Pi^{0:W}| + |\Pi^{W:T_{max}}| = \sum_{t=0}^{W-1} c(\comp^t,\comp^{t+1}) + h_{BD}(\comp^W)$.
We define the joint cost and heuristic intuitively, $c(\comp, \comp') = \sum_{i=1}^N c(s_i,s_i')$ and $h_{BD}(\comp) = \sum_{i=1}^N h^*_i(s_i)$ (BD = Backward Dijkstra).

\textbf{Heuristic Penalties}
Now given the transition (partial path) $\comp \rightarrow \comp^W$ from the AG, we update 
\begin{equation} \label{eq:update}
    h(\comp) \gets U(\comp, \comp^W) := 
    \max(h(\comp), c(\comp, \comp^W) + h(\comp^W))
\end{equation}

$h(\comp)$ is initially set to $h_{BD}(\comp)$ and gradually increases as agents visit configurations. We use the term ``heuristic penalty" (HP) configuration for visited configurations whose heuristic increased via the update equation to denote how the updated heuristic ``penalizes" those configurations and encourages the search to explore others. An HP configuration $\comp$ has a nonzero increase from the base heuristic value, i.e. $h(\comp) = h_{BD}(\comp) + h_{p}(\comp)$ where the penalty $h_p(\comp) > 0$. 

\textbf{Action Generator}
Given $h(\comp)$ which includes states with HPs and without, we want an optimal action generator AG that finds $\argmin_{C{^W}} c(\comp, \comp^W) + h(\comp^W)$. If so, we get the same completeness guarantees by directly reusing the single-agent proof. 
We require an optimal windowed AG for our proof of completeness but conjecture that future work could relax this.

\subsection{Disjoint Agent Groups} \label{sec:coupled-agents}
The framework we have described so far suffers from an obvious issue; the number of configurations $\comp$ grows exponentially with the number of agents. 
Consequently, escaping local minima, which in our context are agents stuck in deadlock/livelock, can require updating the heuristic of an impractical number of states. 


Our key idea is therefore to compute and apply heuristic penalties to specific groups of agents instead of on the entire configuration. This significantly speeds up performance as it allows the search to focus on specific ``stuck" agents instead of on all agents. 
Thus given a configuration transition $\comp \rightarrow \comp^W$, we decompose all the agents into disjoint agent groups such that agents between different groups are not interacting with each other. For example, in Figure \ref{fig:action-generator-framework}, we would like to determine that the blue and green agents are blocking each other but that the orange agents are not.
\begin{definition}[Disjoint Agent Groups]
    Given a configuration transition $\comp \rightarrow \comp^W$, and set of disjoint agent groups $\{Gr^i\}$, we have the property that for each agent $R^j$ with transition $s_j \rightarrow s_j^W$ in disjoint agent group $Gr^i$, there cannot exist another agent in a different group $Gr^k$ that blocked $R^j$ from picking a better path.
\end{definition}
Conceptually, this means that each group's decision to make $\comp_{Gr^i} \rightarrow \comp_{Gr^i}^W$ is independent of agents in other groups. Instead of computing disjoint groups where agents are not interacting, it is easier to determine coupled agent groups where they could be interacting.

\begin{definition}[Coupled Agents]
    Given a configuration transition $\comp \rightarrow \comp^W$, an agent $R_i$ is coupled with $R_j$ if $R_j$ prevents $R_i$ from choosing a better path or vice-versa.
\end{definition}
Note that coupled agents must be in the same disjoint agent group $Gr$. Importantly, disjoint agent groups do not need to solely consistent of coupled agents, i.e. it is okay for extra non-coupled agents (e.g. agents independent of all others) to be in disjoint groups.

From an abstract perspective, we can compute this by iterating through each agent $R_i$ which is not on its optimal single-agent path and storing the ids of the other agents which prevented it from picking a better path (there must exist at least one other agent $R_j$ otherwise $R_i$ could have gone on its optimal path). This builds a dependency graph where agents that share an edge $R_i \rightarrow R_j$ denote $R_i$ blocked by $R_j$.
We can then find all the disjoint connect components in this dependency graph (e.g. via a DFS) where each disjoint connected component depicts a group of agents $Gr^i$ that are blocking each other and are independent from other groups. 
Instead of updating $h(\comp) \gets U(\comp,\comp^W)$, we do $h(\comp_{Gr^i}) \gets U(\comp_{Gr^i},\comp_{Gr^i}^W)$ for each group of agents $Gr^i$ at configuration $\comp_{Gr^i}$.

Our objective is, given $\comp,\comp^W$, to detect these groups of agents and apply heuristic penalties to just the group of agents rather than the entire configuration.
One crucial observation is that it can be non-trivial to determine which agents are coupled once the AG returns the next state. Agents could be next to each other but not block each other, or on the flip side be non-adjacent but coupled. However, instead of reasoning about coupled agents after the AG, we can leverage the AG itself as it must have internally reasoned about agent interactions to return a valid next action. All modern MAPF heuristic search planners reason about agent interactions internally, e.g., M* explicitly couples agents that intersect \cite{mstart_2011}, PIBT's priority inheritance reasons about colliding agents, and CBS resolves collisions between intersecting agents. Thus, we require that the AG additionally returns groups of interacting agents. We highlight that this can be generally done with bookkeeping and without much added compute to existing MAPF solvers. 

Given a configuration $\comp$ and a set of HPs for groups of agents at various $\comp_{Gr}$, we then compute $h(\comp) = h_{BD}(\comp) + \sum_i h_p(\comp_{Gr_i})$ for a disjoint set of groups $\{\comp_{Gr_i}\}$ whose locations match the configuration. 

We lastly note that Disjoint Agents Groups is related to Independence Detection in Operator Decomposition \cite{standley2010operater_decomposition} which dynamically constructs independent subproblems by checking if agents block each other.

\subsection{Overall \framework{} Framework} \label{sec:overall-framework}
Altogether, our framework requires an Action Generator AG with the following two properties:
(1) finds $\argmin_{\comp^W} c(\comp, \comp^W) + h(\comp^W)$, and (2) computes disjoint agent groups for $\comp \rightarrow \comp^W$.

Given such an action generator, we are guaranteed to eventually reach the goal if a solution exists via Algorithm \ref{alg:windowed-framework}. We have a set of Heuristic Penalties (line \ref{alg:w-mapf:initial-hps}) which we pass into the AG with the current configuration (line \ref{alg:w-mapf:ag}). The AG returns a partial path corresponding to configuration $\comp^W$ as well as a list of disjoint agent groups. For each group $Gr$, we compute the update equation with respect to the group's configuration $\comp_{Gr}$ and if it is greater than 0, we add the group configuration and penalty into our library of heuristic penalties. Then, we move agents and repeat.

\begin{theorem} \label{thm:w-mapf-complete}
    Given a finite bidirectional graph and: (1) an initial Backward Dijkstra heuristic, (2) our AG picks $\argmin_{\comp^W} c(\comp, \comp^W) + h(\comp^W)$ and identifies disjoint agents groups, then \framework{} with its update equation (Eq. \ref{eq:update}) applied on group configurations is complete, i.e. all agents will eventually reach their goals if a solution exists. 
\end{theorem}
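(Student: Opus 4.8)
The plan is to reduce the multi-agent problem to a single-agent real-time search instance in the joint configuration space and then to show that the group-based heuristic updates inherit the guarantees of the full-configuration updates. First I would treat each joint configuration $\comp$ as a single ``state'': the joint graph is finite (a product of $N$ copies of the finite individual graph) and bidirectional, the initial heuristic $h_{BD}$ is admissible (a sum of admissible backward-Dijkstra heuristics), and by assumption the AG returns $\argmin_{\comp^W} c(\comp,\comp^W)+h(\comp^W)$, i.e.\ an optimal $W$-step lookahead. Under these conditions the full-configuration update of Eq.~\ref{eq:update} is exactly the windowed LRTA*-style update, so the completeness argument of Section~\ref{sec:lrta-proof} applies verbatim: because the agent moves at least one step and $W\ge 1$, the joint ``agent'' reaches its goal whenever a solution exists. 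This establishes the target behavior that the decomposed scheme must reproduce.

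Second I would show that replacing the single full-configuration update by the per-group updates $h(\comp_{Gr^i}) \gets U(\comp_{Gr^i},\comp_{Gr^i}^W)$ preserves the two properties the real-time search argument actually uses: admissibility (hence boundedness by a finite optimal cost-to-go $h^*$) and monotonicity. For admissibility, the key observation is that removing agents can only relax a subproblem, so the isolated optimal cost-to-go of a group, $h^*_{Gr}$, always lower-bounds the cost that group incurs in any joint optimal solution; summing over a partition gives $\sum_i h^*_{Gr^i}(\comp_{Gr^i}) \le h^*(\comp)$. It therefore suffices to argue that each stored penalty keeps $h(\comp_{Gr}) = h_{BD}(\comp_{Gr}) + h_p(\comp_{Gr}) \le h^*_{Gr}(\comp_{Gr})$, after which the recombined heuristic $h(\comp)=h_{BD}(\comp)+\sum_i h_p(\comp_{Gr^i})$ is automatically admissible for the joint problem. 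This is precisely where the Disjoint Agent Groups property is needed: because no agent outside $Gr$ blocked any agent inside $Gr$, the AG's globally optimal transition projected onto $Gr$ is the group's optimal isolated $W$-step transition, so $U(\comp_{Gr},\comp_{Gr}^W)$ is a valid Bellman update inside the group's own context-independent subproblem and cannot exceed $h^*_{Gr}$. The same independence makes the learned penalty reusable: a penalty computed for a group configuration stays a valid lower bound whenever that configuration reappears, regardless of the surrounding agents.

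Finally I would run the progress argument. The heuristic is non-decreasing by construction and, by the admissibility bound above, is capped by the finite quantity $h^*$; if the agents never reached their goals, some group would be perpetually trapped in a heuristic depression and would revisit the same group configurations infinitely often, each visit triggering a strict increase of that group's penalty, contradicting boundedness. Hence every group, and therefore every agent, must escape and reach its goal.

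I expect the main obstacle to be the bookkeeping around dynamically changing group memberships. The per-group penalties are stored and later recombined under possibly different partitions of the agents, so I must show that this recombination increases $h(\comp)$ at each stuck configuration by at least as much as the single full-configuration update would, i.e.\ that the decomposed updates dominate the monolithic one for the purpose of the completeness argument, and that the stuck set is correctly ``charged'' to some group even as groups form and dissolve. Pinning down this domination, and verifying that $\sum_i h(\comp_{Gr^i}^W)$ behaves correctly relative to $h(\comp^W)$ when the group partitions differ, is the delicate part; everything else follows from a direct transcription of the single-agent real-time search proof.
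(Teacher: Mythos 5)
Your proposal is correct and follows essentially the same route as the paper's own proof: both reduce completeness to showing that the grouped heuristic penalties preserve admissibility (your projection argument that a disjoint group's transition must be group-optimal under an optimal AG is the contrapositive form of the paper's Lemma~\ref{thm:suppl-grouped-admissible}), then sum over disjoint groups using the fact that agent interactions only increase true cost-to-go, and finally invoke the bounded, non-decreasing heuristic to rule out infinite cycling as in the LRTA* argument of Section~\ref{sec:lrta-proof}. The recombination subtlety you flag at the end is handled in the paper only by the mutually-exclusive (greedy disjoint) HP selection of Section~\ref{sec:apdx-which-hp-apply}, so your instinct that this is the delicate bookkeeping point is consistent with, not a departure from, the paper's treatment.
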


Section \ref{sec:winc-mapf-proof} contains a formal proof. Briefly, the key is showing how $h(\comp)$ remains admissible even with the grouped HPs, which we do by reasoning about the interaction of agents inside of and between disjoint agent groups. 
    
\begin{algorithm}[t]
\caption{Windowed Complete MAPF Framework}
\label{alg:windowed-framework}
\begin{algorithmic}[1] 
\Procedure{Windowed Complete MAPF}{$\comp^{cur}$}
\State $\mathcal{H} \gets \emptyset$ \Comment{Heuristic Penalties} \label{alg:w-mapf:initial-hps}
\While{$\comp^{cur} \neq$ Goal}
    \State $\comp^W$, ListOfGroups = AG($\comp^{cur}$, $\mathcal{H}$) \label{alg:w-mapf:ag}
    \For{$Gr$ $\in$ ListOfGroups} \Comment{For each group}
    \State $h_{n} \gets U(\comp^{cur}_{Gr}, \comp_{Gr}^W)$ \Comment{Equation (1)}
        \State $penalty \gets h_{n} - h_{BD}(\comp_{Gr}^W)$
        \If{$penalty > 0$}
            \State $\mathcal{H}$.insert($\comp^{cur}_{Gr}$, $penalty$)
        \EndIf
    \EndFor
    \State $\comp^{cur} \gets C^{0<i\leq W}$  \Comment{Move at least one step}
\EndWhile
\EndProcedure
\end{algorithmic}
\end{algorithm}

The main assumptions we have currently are that we have a centralized AG and we have a perfect backward Dijkstra heuristic for each agent. The first assumption allows the AG to reason about heuristic penalties between coupled agents. The second assumption is not strictly required but simplifies the problem as we do not need to do single-agent heuristic updates. Both of these assumptions are common in current MAPF literature, and prior work has shown that for certain state-of-the-art methods like LaCAM, a perfect backward Dijkstra heuristic is required \cite{veerapaneni2024improving_mapf_policies_with_search}. Both of these assumptions can be relaxed in future work.

Section \ref{sec:ss-cbs} describes Single-Step CBS, an action generator that satisfies the two properties required by \framework{}.


\begin{figure*}[t!]
    \centering
    \includegraphics[width=0.95\textwidth]{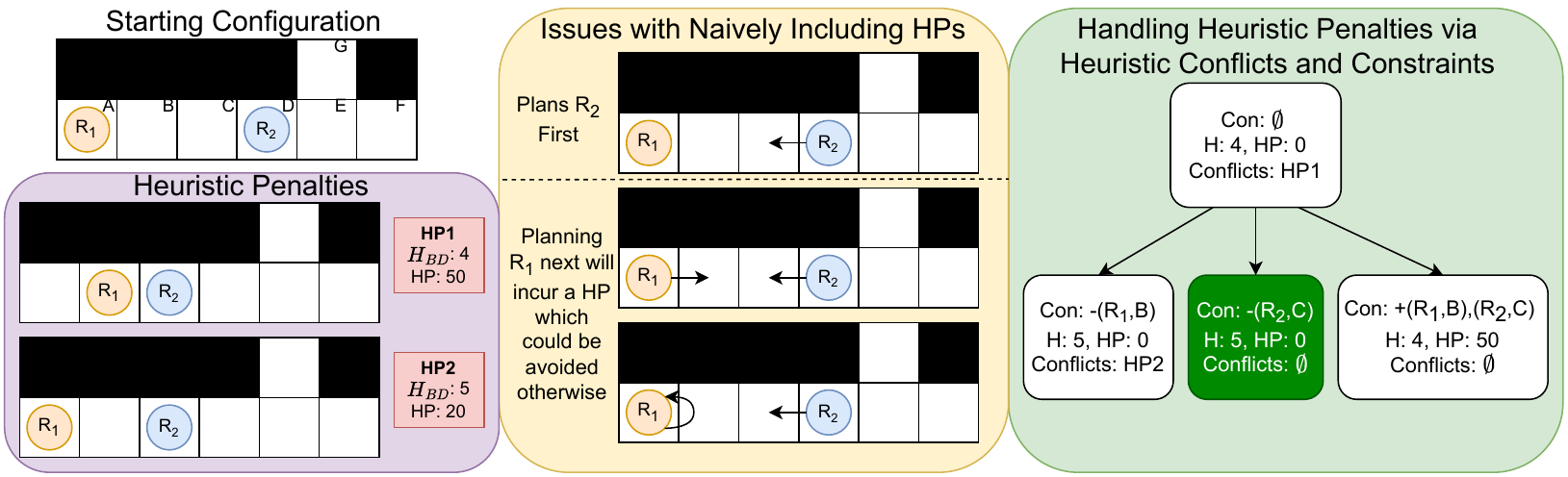}
    \caption{
    We depict an illustrative situation where SS-CBS needs to determine the best next configuration given heuristic penalties (HP, left). Section \ref{sec:hp-constraints} describes this figure and how naively incorporating HPs in CBS results in incorrect solutions (middle). Our innovation is to introduce ``heuristic conflicts" which allows SS-CBS to find the optimal solution (right).
    }
    \label{fig:handling-hps}
    \vspace*{-1em}
\end{figure*}

\section{Single-Step CBS} \label{sec:ss-cbs}
We want to design a windowed solver that incorporates heuristic penalties and optimally solves $W=1$, i.e. finds the next best step that minimizes $c(\comp, \comp') + h(\comp')$. We focus on solving $W=1$ as this is still a challenging problem. In MAPF with $N$ agents and an individual action space of size 5, naively computing the optimal action requires generating all $5^N$ possible neighboring configurations as heuristic penalties could arbitrarily be placed! Thus we employ CBS\footnote{Readers unfamiliar with CBS should read \citet{sharon2015cbs}}, an optimal full horizon planner, to intelligently find the optimal single-step configuration.

Since we want a windowed solver with $W=1$, Single-Step CBS (SS-CBS) only considers conflicts within the first timestep. 
However, regular CBS does not operate in the configuration space of MAPF problems and instead exploits the structure of MAPF to iteratively plan agents individually. One key assumption in CBS is that it can minimize the joint $c(\comp, \comp') + h(\comp')$ by minimizing individual $c(s,s') + h(s')$ subject to constraints. Thus, incorporating HPs which work on the configuration of disjoint agent groups breaks this assumption and requires careful reasoning.
Our main innovation lies in modifying SS-CBS to return the optimal solution given heuristic penalty updates. A minor additional modification is returning disjoint agent groups.

\subsection{Handling Heuristic Penalties with Constraints} \label{sec:hp-constraints}
Incorporating updates to $h$ via heuristic penalties is non-trivial in CBS. A naive way to incorporate a HP in CBS is to plan CBS normally and just add the HP cost to Constraint Tree (CT) nodes whose configurations match the penalty. However, we show that this fails to find an optimal solution. A second naive way is to incorporate the penalty in the low-level search which similarly fails. The core conceptual issue with naively incorporating HPs is that the high-level/low-level is not fully aware of the HPs until \textit{after} it has planned paths, so it is unable to avoid them beforehand. 

Figure \ref{fig:handling-hps} depicts Single-Step CBS in a scenario where two agents want to swap their locations from their starting configurations (i.e., $R_1,R_2$ want to reach $D,A$ respectively, top left). The left purple box shows two HPs with their corresponding configurations and penalty values (50 and 20 respectively). We created the two HPs for this example; in the real system, HPs would be created from previous iterations of execution and applying the subgroup logic and Equation \ref{eq:update} described in Section \ref{sec:coupled-agents}. Given the starting configuration and the two HPs, SS-CBS needs to find the optimal next configuration that minimizes $c(\comp,\comp') + h(\comp')$. Since the cost of all actions is 1, $c(\comp, \comp') = 2$ regardless of the chosen $\comp'$ configuration. Thus, we focus on minimizing $h(\comp')$. In our example, the optimal next configuration is $(R_1,B), (R_2,D)$ which has a heuristic of 5.

\textbf{Incorrect: Incorporating HPs in High Level}
The most obvious way to incorporate HPs is to add them to the CT node's heuristic if the CT node's configuration matches the penalty. This fails as the penalty is applied \textit{after} the low-level planning occurs, so the low-level planner does not avoid HPs in the beginning.

We take a look at generating the root CT node in our example (middle yellow box). If the root node first plans $R_2$, then $R_2$ moves to $C$ as this reduces its heuristic (middle box, top row). We do not incur a HP as we do not know the configuration of $R_1$ yet. When we plan $R_1$, the low-level search has $R_1$ minimize its single agent heuristic and move to $B$ which results in the root CT node with $(R_1,B),(R_2,C)$ (middle box, middle row). This then incurs HP1's penalty of 50. Note that there are no agent collisions and thus CBS will return the single-step solution with a net-heuristic of $4+50=54$. This is substantially worse than the true single-step solution.

\textbf{Incorrect: Incorporating HPs in Low Level}
On the flip side, we could attempt to incorporate the heuristic penalty in the low-level search. 
When replanning an agent, we know the location of all other agents, so the low-level search can check if certain configurations would incur a heuristic penalty. However, this fails as the first agents that plan in the root CT node do not know the locations of other agents that haven't been planned yet. As a result, they plan greedily, potentially forcing later agents into suboptimal situations.

Like before, we can plan $(R_2,C)$ which does not incur any penalty as we do not have $R_1$'s location. When planning for $R_1$, we know $(R_2,C)$ and the search will penalize $(R_1,B)$ by HP1 and instead picks $(R_1,A)$ which is only penalized by HP2 (middle box, bottom row). This results in a net heuristic of $5+20=25$ which is again not optimal.

\textbf{Solution: Introducing ``Heuristic Conflicts"}
Our idea is thus \textit{not} to incorporate the heuristic penalty immediately. Instead, when CBS encounters a configuration that would incur a penalty, it marks the CT node with a ``heuristic conflict" without adding the penalty into the CT node's heuristic value yet. Formally, a heuristic conflict occurs when agents' locations match the configuration of a HP. Resolving the heuristic conflict requires applying regular (negative) vertex constraints on each agent in the heuristic conflict which forces them to avoid the configuration (and thus penalty) as well as one CT node with positive vertex constraints which requires the agents to be at the penalty configuration and only then incurring the HP\footnote{``Negative" vertex constraints avoid vertices while ``positive" vertex constraints force agents to certain vertices.}.

In our example, the agents plan independently like usual and the root node has no vertex or edge conflicts. However, we detect that HP2 could apply and create the corresponding heuristic conflict (right box, top CT node). We then generate three child nodes, the first two CT nodes with negative vertex constraints and the last one with multiple positive vertex constraints. We see how this results in the optimal configuration being found (highlighted in green). Thus given an HP with $K$ agents, our heuristic constraint will generate $K$ children with a single additional negative vertex constraint and one child with $K$ additional positive vertex constraints.

\subsection{Detecting Disjoint Agent Groups}
SS-CBS should also return disjoint agent groups where each disjoint group contains coupled agents. Our key observation is that coupled agents must have conflict(s) between them that got resolved. 
Similarly, independent agents will not conflict with each other. 

One small subtly; coupled agents must have conflicts, but agents with conflicts may not be coupled, e.g. an agent could tie-break poorly and have an ``unnecessary" conflict. From our definition of disjoint agent groups, having extra agents in groups is acceptable.

There is the possibility of indirect interactions. In particular, $R_1$ could conflict with $R_2$, causing $R_2$ to replan which then conflicts with $R_3$. In this case, $R_1$'s action of $R_2$ directly caused an interaction with $R_3$, so $R_1$ and $R_3$ are indirectly coupled. 
Thus we can determine disjoint groups of coupled agents by first generating non-disjoint groups of agents for each resolved conflict and then merging groups with shared agents. In our example, we start with $(R_1,R_2)$ and $(R_2,R_3)$ and will end with $(R_1,R_2,R_3)$ after merging.

We note that we only care about resolved conflicts that occur on CT nodes on the solution branch that led to the outputted configuration. Thus when planning, once a goal CT node is reached, we backtrack and store all resolved vertex, edge, and heuristic conflicts as individual groups, and then merge groups together to get our disjoint agent groups. 

\subsection{Subtleties}
We have two additional subtleties. First, there is an important implementation nuance for determining which HP to apply to a configuration if multiple HPs were applicable. Second, we introduced a tiebreaking mechanism that, when faced with two solutions of equal cost, prioritizes reducing the heuristic of some agents over others. Both are discussed in detail in the appendix.

\begin{figure*}[t!]
    \centering
    \includegraphics[width=0.9\textwidth]{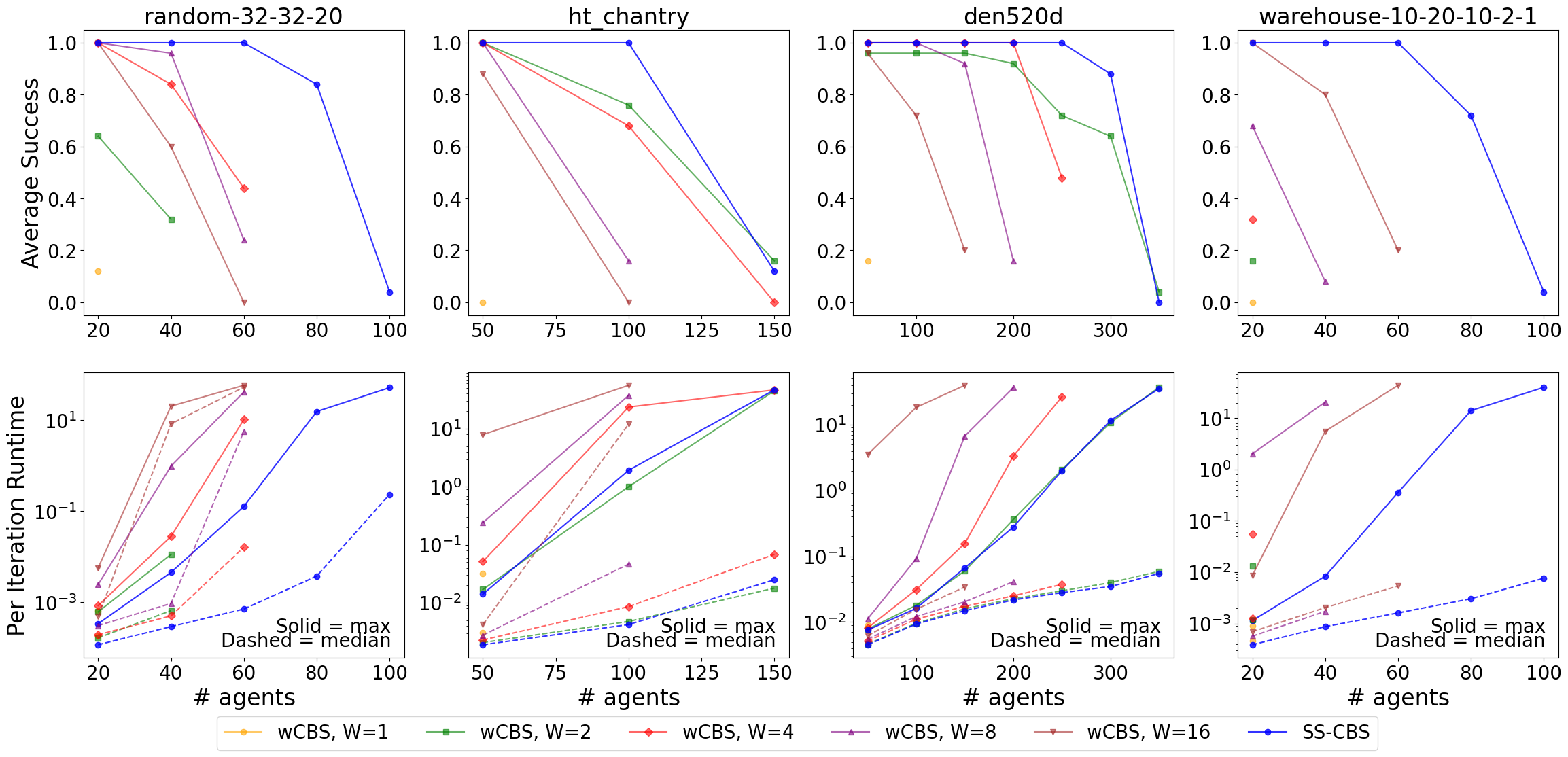}
    \caption{We compare our method SS-CBS (blue), which has $W=1$, against windowed CBS (wCBS) with different window sizes $W=\{1, 2, 4, 8, 16\}$. SS-CBS is theoretically complete (wCBS is not) and also outperforms wCBS empirically.
    }
    \label{fig:main-results-plot}
    \vspace*{-1em}
\end{figure*}

\section{Experiments}
Our experiments demonstrate the empirical performance of our theoretically complete SS-CBS algorithm. We first evaluate SS-CBS on standard benchmark maps \cite{stern2019mapfbenchmark} and observe that SS-CBS is indeed able to outperform the windowed baselines. We then evaluate SS-CBS on high-congestion small maps and showcase SS-CBS's superiority in this regime. Our appendix contains additional analysis and results. We highlight that there do not exist any complete windowed baselines. We thus compare against CBS with window $W = \{1, 2, 4, 8, 16\}$ (wCBS). All algorithms were implemented in C++ and run on a PC with a 2.30 GHz Intel i7-11800K CPU.

\subsection{Benchmark Scenarios}
Figure \ref{fig:main-results-plot} shows the results of SS-CBS compared to wCBS without heuristic penalties. We evaluate on 4 benchmark maps (each column: random-32-32-20, ht\_chantry, den520d, warehouse-10-20-10-2-1) with a 1 minute timeout across 25 scenarios. We additionally end wCBS in failure when it repeats previously visited configurations 100 times (i.e. deadlock or livelock).

The first row shows how SS-CBS (blue) has an almost strictly better performance than the wCBS baselines. We first highlight that the performance of wCBS depends significantly on the map, with no single window value dominating. wCBS with $W=1,2$ sometimes perform poorly due to their small window size which leads to deadlock/livelock, while $W=8, 16$ sometimes suffer from large runtimes. 
SS-CBS is able to consistently perform better than the corresponding wCBS method even though SS-CBS plans only a single step. SS-CBS's performance on warehouse-10-20-2-1 shows how it performs well in scenarios that require long-term planning (as shown by wCBS requiring a larger window size to have non-trivial success rate).

The second row shows the per-iteration runtime of each method, with the median runtime in solid and the maximum (longest) iteration runtime in dashed. \textbf{Limitation:} The difference between the median and maximum highlights how congested iterations can take a significant amount of time (e.g. 10s of seconds for one optimal step) compared to typical iterations. All the failure instances for SS-CBS occurred due to the large runtime of those bottleneck iterations. We observe how SS-CBS's runtime is usually in line with wCBS $W=1,2,4$ and substantially smaller than $W=8,16$.

The solution cost of SS-CBS is roughly 2-8\% higher than wCBS due to SS-CBS's myopic planning window. We observed that the number of HPs created by SS-CBS grew roughly linearly along with the number of agents, but the number of HPs encountered in the CT grew exponentially.



\subsection{Tough Congested Scenarios}
The previous section shows how SS-CBS can outperform wCBS in standard benchmark scenarios. We were additionally interested in SS-CBS's performance in extremely congested scenarios and thus evaluated it on small maps with high congestion from \citet{okumura2022lacam}.

Table \ref{tab:sneak-peak} shows the average success from 20 seeds with a timeout of 1 minute, with the number $N$ in parenthesis denoting using the first $N$ agents in a scenario. We see how windowed CBS with $W=1,2,4,8,16$ fails completely. 
Additionally, CBS with all CBS improvements, including bypass, symmetry reasoning, prioritized conflicts, and \textit{full horizon planning} (CBS+) struggles.

Table \ref{tab:small-results-appendix} shows the average runtime in \textit{milliseconds}. On solved instances, SS-CBS is much faster compared to EECBS with a suboptimality $w_{so}$ of 1.5 and CBS+. \textbf{Limitation:} However, the drawback of using SS-CBS here is that its solution cost is horrible. It is about 7500 for Tunnel (4), 11000 for Loopchain (7), and 450 for Connector (6).

\begin{table}[t!]
\setlength{\tabcolsep}{3pt}
\resizebox{0.99\linewidth}{!}{
\begin{tabular}{c||rr|rr|rr}
 & \multicolumn{2}{c}{SS-CBS} & \multicolumn{2}{c}{CBS+} & \multicolumn{2}{c}{EECBS $w_{so}=1.5$} \\
Instance & Success & Time & Success & Time & Success & Time \\ \hline
Tunnel (3) & 1 & 85 & 0.9 & 43,423 & 0.80 & 3,768 \\
Tunnel (4) & 1 & 2,713 & 0 & - & 0 & - \\ \hline
Loopchain (6) & 1 & 15,393 & 0 & - & 0 & - \\
Loopchain (7) & 0.95 & 18,268 & 0 & - & 0 & - \\ \hline
Connector (5) & 1 & 15 & 0 & - & 0.95 & 146 \\
Connector (6) & 1 & 43 & 0 & - & 0.90 & 1,116
\end{tabular}
}
\caption{SS-CBS with \textit{one-step planning} can solve problems that even CBS and EECBS \textit{with full horizon planning} and all optimizations cannot. 
This highlights SS-CBS's and the broader \framework{} framework's ability to effectively use grouped heuristic penalties to resolve congestion. 
Time (ms) is averaged over successful instances.}
\label{tab:small-results-appendix}
\vspace{-1em}
\end{table}

SS-CBS's performance demonstrates two key points. First, its high success rate on these challenging, congested maps shows how heuristic penalties can effectively guide SS-CBS to bypass complex congestion. Second, full horizon CBS+ struggles due to the high number of conflicts in these scenarios. SS-CBS's success highlights how iterative single-step planning with heuristic updates can be an effective approach for resolving difficult congestion issues.
\textit{All existing windowed methods produce congestion} due to their myopic planning. Thus, SS-CBS improved success rate with windowed planning in severe congestion is remarkable.

\section{Conclusion and Future Work} \label{sec:conclusion}
Existing MAPF works have focused on designing methods to solve full-horizon planning. When faced with shorter deadlines, all current methods take the full horizon MAPF methods and simply reduce the planning horizon to a smaller window. This has been shown to cause deadlock/livelock/insurmountable congestion due to the limited planning horizon. We introduce \framework, the first framework that enables theoretical completeness with windowed MAPF solvers. In particular, we show that using windowed ``Action Generator" that incorporates heuristic penalties, identifies disjoint agent groups, and optimally minimizes $c(\comp, \comp^W) + h(\comp^W)$ is complete. Following this framework, we designed SS-CBS which uniquely introduces ``heuristic conflicts" to successfully incorporate heuristic penalties and return the optimal next step action.
We experimentally validate how our theoretically complete method actually translates to real performance benefits with SS-CBS consistently outperforming windowed CBS across a variety of windows and maps. 

We are very excited about future work that can relax the limitations of our framework.
First, the most useful extension is to incorporate and prove completeness for bounded or even abitrarily suboptimal AGs within our framework (we are limited to optimal AGs). This can allow more solvers such as PIBT, MAPF-LNS2 \cite{li2022mapf-lns2}, W-EECBS \cite{effectiveCBS}, or even using a learnt neural network policy. 
Second, one obvious extension is to generalize SS-CBS to work on longer horizons. 
This also opens the door to using more sophisticated real-time heuristic update methods.
Third, future work could try to relax the perfect backward Dijkstra single-agent heuristic assumption and learn individual heuristics online.

Planning partial paths rather than full paths is a significant target that researchers need to achieve for their methods to be used in real systems.
We believe the \framework{} framework and SS-CBS are a significant step towards bridging this gap and enabling effective windowed MAPF solvers.

\section*{Acknowledgments}
The research was supported by the National Science Foundation under grant \#2328671, by the National Science Foundation Graduate Research Fellowship Program under grant \#DGE2140739, and a gift from Amazon. The views and conclusions in this document are those of the authors and should not be interpreted as representing the official policies, either expressed or implied, of the sponsoring organizations, agencies, or the U.S. government.

\bibliography{ref}

\clearpage

\appendix

\setcounter{figure}{0}
\renewcommand{\thefigure}{A\arabic{figure}}
\setcounter{table}{0}
\renewcommand{\thetable}{A\arabic{table}}

\section{Quick Summary} \label{sec:apdx-summary}
\subsubsection*{Recommended background readings} Readers new to MAPF or CBS are recommended to read CBS \cite{sharon2015cbs}. Readers unfamiliar with windowed MAPF solvers should read Rolling-Horizon Collision Resolution \cite{rhcrLi2020}. Readers new to Real-Time Heuristic Search can take a look at LRTA* \cite{korf1990_lrta}.

\subsubsection{Motivation in respect to prior work: }
The majority of MAPF methods which find entire collision-free paths to goal can take a long time (e.g. $>10$ seconds). Real-world practitioners cannot wait this long. Thus, to reduce planning time, existing works only reason about collisions within a fixed time window/horizon, where the window $W$ is much smaller than the length of the entire solution path. 

A key issue with these windowed approaches is that their myopic planning results in deadlock or livelock if their window is too small. 
Table \ref{tab:sneak-peak} shows examples where windowed MAPF solvers fail in congestion which requires longer horizon planning.
More broadly, all existing windowed MAPF solvers regardless of window size lack theoretical completeness, and several windowed works have explicitly cited deadlock as a key issue in their experiments \cite{rhcrLi2020,pibt,jiang2024scaling_mapf_competition}. 


\subsubsection*{Intended Takeaways} \hfill \par
1. \underline{Win}dowed \underline{C}omplete MAPF (\framework) Framework: We develop the first general framework that enables creating windowed MAPF solvers that have completeness guarantees. Our first insight is that we can leverage the single-agent Real-Time Heuristic Search perspective which uses limited horizon/windowed planning but maintains completeness by updating the heuristics of visited states. This, however, does not lead to a practical algorithm as it requires exploring the entire joint configurations to get out of deadlocks/livelocks. Our second insight is to leverage the semi-independence of agents in MAPF and only compute the heuristic updates in respect to groups of coupled agents rather than the entire joint configuration space.

Formally, we define a windowed MAPF ``Action Generator" (AG) as a search method that, given a window $W$ and a current configuration $\comp$, finds a $\comp^W$ (and associated actions) that is a valid neighboring configuration within $W$ timesteps. Additionally, the AG needs to determine groups of coupled agent groups as defined in Section \ref{sec:coupled-agents}. We prove in the next section how an optimal windowed AG that computes $\argmin c(\comp, \comp^W) + h(\comp^W)$ is complete under regular MAPF conditions. 

2. Single-Step CBS (SS-CBS): We develop SS-CBS which is an instantiation of an AG that can be used within our \framework{} framework. SS-CBS finds the optimal next step (so window $W=1$) given heuristic updates. We show that naively incorporating heuristic updates (which we redefine as heuristic ``penalties") by adding it into the high-level or low-level CBS search is incorrect. SS-CBS's innovation is to incorporate heuristic penalties by introducing a new ``heuristic conflict" and constraint that defers the addition of the heuristic penalty and enables all agents to replan to avoid a penalty. We additionally show how SS-CBS can easily determine coupled agent groups by merging pairs of conflicting agents.

Experimentally, Figure \ref{fig:main-results-plot} shows how SS-CBS has a higher success rate and agent scalability on standard MAPF benchmark maps compared to windowed CBS across windows $W = \{1, 2, 4, 8, 16\}$. We highlight that SS-CBS plans only one single step (i.e., extremely myopic planning) but is able to handle congestion/deadlock better. Additionally, we investigate the performance of these windowed methods in tough small scenarios (Tables \ref{tab:sneak-peak} and \ref{tab:small-results-appendix}). In these scenarios, we find that windowed CBS methods uniformly fail, and that even CBS with all optimizations struggles. SS-CBS is able to almost perfectly solve these instances. All existing windowed MAPF methods struggle in congestion, so SS-CBS's superior performance in these congested scenarios highlights the power of the \framework{} framework.

\subsubsection{Main Limitations and Future Improvements} \hfill \par
1. SS-CBS's main limitation is that although most iterations are fast ($<$0.1 seconds), Figure \ref{fig:main-results-plot} shows how a few iterations can take a significant amount of time (e.g. $>$10 seconds) when congestion increases. Future work should improve the runtime of SS-CBS.

2. SS-CBS requires a perfect single-agent heuristic. Although most MAPF methods assume this, this is not generally required for CBS and future work could relax this.

3. SS-CBS plans a single step. Future work can extend SS-CBS to planning multiple steps.

4. Our \framework{} currently proves completeness with an optimal AG. We conjecture that future work can likely modify and prove that bounded suboptimal AG's can be complete within our framework.

\section{Proving Completeness of \framework{} with Group Heuristic Penalties}

A key idea of \framework{} is to compute heuristic penalties (HPs) on groups of agents rather than the entire configuration. 
We first roughly restate the standard single-agent proof used in the original real-time search LRTA* paper \cite{korf1990_lrta} to provide context on the differences that our \framework{} framework has in respect to proving completeness. We cannot directly use LRTA*'s proof due to our use of agent group HPs. Instead, we can use the fact that this proof shows how (1) proving that an algorithm does not cycle infinitely proves completeness and (2) how heuristic values with cycles with an update of $h'(s^i) \gets c(s^i, s^{i+1}) + h(s^{i+1})$ results in infinitely large heuristic values. We then show how our heuristic values are admissible, which, combined with (1) and (2) with optimal windowed AGs, result in our framework being complete.

\subsection{Standard Single-Agent Real-Time Search Completeness Proof} \label{sec:lrta-proof}
LRTA* \cite{korf1990_lrta} is a single-agent algorithm where at each timestep, the agent picks $\argmin_{s'} c(s,s') + h(s')$ over successor states $s'$ and updates the heuristic $h'(s) \gets c(s, s') + h(s')$. We summarize its proof of completeness.

\begin{theorem}
    In a finite bidirectional graph with positive edge costs and finite heuristic values, in which a goal state is reachable from every state, LRTA* will find a solution. 
\end{theorem}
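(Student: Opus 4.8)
The plan is to argue by contradiction, following the classical LRTA* analysis, reducing completeness to a statement about the heuristic values on the states the agent visits forever. Suppose LRTA* runs forever without reaching the goal. Since the graph is finite, at least one state is visited infinitely often; let $\mathcal{S}_\infty$ be the (nonempty, finite) set of all such states. After some finite time $T$, the agent only ever occupies states in $\mathcal{S}_\infty$, and every move it makes goes from a state of $\mathcal{S}_\infty$ to a successor that is again in $\mathcal{S}_\infty$. Crucially, the goal is never reached, so the goal does not belong to $\mathcal{S}_\infty$.

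First I would establish two properties of the running heuristic $h$. (i) Each update $h(s)\gets c(s,s')+h(s')$ with $s'=\argmin_{s''} c(s,s'')+h(s'')$ can only increase $h(s)$ (equivalently one writes it as $h(s)\gets\max(h(s),\,c(s,s')+h(s'))$), so $h$ is non-decreasing in time at every state. (ii) Admissibility is preserved: if $h(\cdot)\le h^*(\cdot)$ before an update, then $c(s,s')+h(s')=\min_{s''}[c(s,s'')+h(s'')]\le\min_{s''}[c(s,s'')+h^*(s'')]=h^*(s)$, so $h(s)\le h^*(s)$ after the update as well. Because the graph is finite with positive edge costs and a goal is reachable from every state, $h^*(s)<\infty$ for all $s$, so the running heuristic stays uniformly bounded above by a finite quantity. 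The backward Dijkstra heuristic assumed elsewhere in the paper is exact and hence admissible, so (ii) applies from the start.

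Combining (i) and (ii), the heuristic values restricted to $\mathcal{S}_\infty$ are non-decreasing and bounded above, hence each converges to a finite limit $h_\infty(s)$. Next I would pass to the limit in the update rule on $\mathcal{S}_\infty$: since each $s\in\mathcal{S}_\infty$ is updated infinitely often and its chosen successor lies in $\mathcal{S}_\infty$ after time $T$, the limit must satisfy the fixed-point relation $h_\infty(s)=\min_{s'}[c(s,s')+h_\infty(s')]$ with the minimizing $s'$ inside $\mathcal{S}_\infty$. This is a Bellman equation on the subgraph induced by $\mathcal{S}_\infty$, a subgraph that contains no goal state.

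The main obstacle, and the heart of the proof, is to show this finite fixed point cannot exist. I would follow the chain of minimizing successors $s\to s_1\to s_2\to\cdots$, all inside $\mathcal{S}_\infty$; unrolling the fixed-point equation $m$ times gives $h_\infty(s)=\sum_{k=0}^{m-1} c(s_k,s_{k+1})+h_\infty(s_m)\ge m\,c_{\min}$, where $c_{\min}>0$ is the smallest edge cost. Letting $m\to\infty$ forces $h_\infty(s)=\infty$, contradicting finiteness. Hence the assumption that the agent never reaches the goal is untenable, and LRTA* must find a solution. The delicate points to verify are that the argmin successor genuinely remains within $\mathcal{S}_\infty$ (so the goal-free Bellman equation is legitimate) and that the upper bound from admissibility is precisely what rules out the competing possibility of unbounded heuristic growth; these two facts, pulling in opposite directions, are exactly what produces the contradiction, and they are what the \framework{} proof will later have to re-establish in the grouped-penalty setting.
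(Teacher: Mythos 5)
Your argument does not prove the theorem as stated. The statement assumes only \emph{finite} heuristic values; it makes no admissibility assumption, and Korf's result (and the paper's proof) hold for arbitrary finite, possibly inflated, initial heuristics. Your contradiction needs two opposing facts: an upper bound $h \le h^*$ maintained by your property (ii), and unbounded growth obtained by unrolling the Bellman relation on $\mathcal{S}_\infty$. The upper bound exists only if the \emph{initial} heuristic is admissible; for a finite but inadmissible initial heuristic, (ii) is vacuous and the contradiction evaporates. Your appeal to ``the backward Dijkstra heuristic assumed elsewhere in the paper'' imports a hypothesis the theorem does not grant. The paper's proof needs no such ceiling: it shows that if the agent is stuck in a finite cycle, each traversal must strictly increase some state's value (otherwise $h(s_1) > h(s_2) > \cdots > h(s_N) > h(s_1)$, a contradiction), so values inside the cycle grow without bound while the values of all states \emph{outside} the cycle stay frozen and finite; hence the one-step lookahead eventually prefers an exit, and a finite graph has only finitely many goal-free cycles to escape. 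Your proof can be repaired the same way without adding hypotheses: your Bellman unrolling already supplies the unbounded-growth half (indeed, since the chain of successors stays in the finite set $\mathcal{S}_\infty$, it must revisit a state, giving $h_\infty(s) > h_\infty(s)$ immediately, with no need to let $m \to \infty$); then compare the exploding interior values against the frozen, finite values of neighbors just outside $\mathcal{S}_\infty$ --- such neighbors exist because the goal is reachable from every state yet lies outside $\mathcal{S}_\infty$ --- to contradict the assumption that the agent never leaves $\mathcal{S}_\infty$.

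Two further remarks. First, your parenthetical claim that the raw update $h(s) \gets c(s,s') + h(s')$ ``can only increase'' $h(s)$ and is equivalent to the max form is false for admissible-but-inconsistent heuristics; the paper flags exactly this in its Subtleties paragraph, and your monotone-convergence step genuinely requires the max-form update of Eq.~\ref{eq:update}, so you should state that you are analyzing that variant rather than assert an equivalence. Second, your overall shape --- admissibility gives a ceiling, cycling forces blow-up, contradiction --- is not the paper's proof of \emph{this} theorem, but it is precisely the strategy the paper uses for Theorem~\ref{thm:w-mapf-complete} (Section~\ref{sec:winc-mapf-proof}), where admissibility of the grouped-penalty heuristic is separately established in Lemma~\ref{thm:suppl-grouped-admissible}. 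So your argument has the right shape for the \framework{} completeness proof, but as a proof of the present theorem it establishes a strictly weaker statement, and it misses the point that LRTA*-style updates give completeness regardless of the quality of the initial heuristic.
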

\begin{proof}
    In a finite bidirectional graph, if LRTA* does not reach the goal, there must exist a finite cycle that LRTA* is stuck in (otherwise it will visit new states and eventually the goal).

    Suppose LRTA* is stuck in a finite cycle of length N consisting of $s_1, s_2, ..., s_N$, and is at $s_1$. When moving to the next state $s_2$, it updates $h'(s_1) \gets c(s_1, s_2) + h(s_2)$. Our objective is to show that when traversing the cycle once, at least one state $s_i$ has their heuristic value increase. If so, then over an infinite amount of cycling, the heuristic values of at least one state $s_i$ in the cycle will become infinitely large. Then when LRTA* is at $s_{i-1}$ and does a 1-step lookahead, it will pick a different state and exit the cycle.

    Suppose LRTA* travels the cycle but for all $s_i$ we have that $h'(s_i) \gets c(s_i, s_{i+1}) + h(s_{i+1}) = h(s_i)$. This implies that $h(s_i) > h(s_{i+1})$ as $c(s_i, s_{i+1}) > 0$. This occurs over all $s_i$ which leads to $h(s_1) > h(s_2) > ... > h(s_N) > h(s_1)$ which is a contradiction. Therefore there must be some state that gets its heuristic value increased, e.g. $\exists s_i$ s.t. $h'(s_i) \gets c(s_i, s_{i+1}) + h(s_{i+1}) > h(s_i)$.

    Thus, LRTA* is guaranteed to eventually leave any finite cycle. Since a finite bidirectional graph has a finite set of cycles without the goal, LRTA* at worst will explore all of these and eventually exit them and reach the goal. 
\end{proof}

\subsubsection{Subtleties}
This proof for completeness assumes that $h'(s_i) \gets c(s_i, s_{i+1}) + h(s_{i+1}) \geq h(s_i)$. This holds when $h$ is consistent but may not hold when $h$ is admissible. Thus for arbitrary $h$ we should update via $h'(s_i) \gets \max(h(s_i), c(s_i, s_{i+1}) + h(s_{i+1}))$ (Eq. \ref{eq:update}) and then reuse the same completeness proof.

This proof also assumes that $s_{i-1}$ has a different state that can be picked, i.e. it has more than one neighbor. If there exists a path in the cycle to the goal, then there must be some $s_{i-1}$ where this holds and we apply this logic for that $s_{i-1}$.

Lastly, this proof was shown for one-step planning but can be directly applied to planning for $W$ steps. It is also independent of how many steps along the $W$ length partial path the agent chooses to move (as long as it moves at least one step along the path). 

\subsection{\framework{} Proof} \label{sec:winc-mapf-proof}
From a high level, the LRTA* proof proves completeness by showing that in a cycle and given the update $h'(s_i) \gets c(s_i, s_{i+1}) + h(s_{i+1})$, the heuristic values in the cycle must increase infinitely large to the extent that LRTA* will pick a different $s_i'$ that avoids the large heuristic value. This requires/assumes that only heuristic values in the cycle increases while other heuristic values do not. 

A key idea of \framework{} is to compute heuristic penalties on the configuration of disjoint agent groups rather than the entire configuration. Note that without this, i.e., if we just compute heuristic penalties on the full configuration, we get completeness by directly applying the single-agent proof. 
Thus, using group HPs complicates the proof of completeness as when computing/creating a heuristic penalty, we increase the heuristic values of configurations not in the cycle.

Instead, we prove completeness in Theorem \ref{thm:w-mapf-complete} by showing how our overall heuristic values are always admissible. If we can show this, then we are guaranteed to not cycle infinitely as this will infinitely increase some heuristic value and contradict our admissible heuristic value guarantee.

\begin{theorem} \label{thm:suppl-joint-admissible}
    Given a finite bidirectional graph and: (1) an initial Backward Dijkstra heuristic, (2) our AG picks $\argmin_{\comp^W} c(\comp, \comp^W) + h(\comp^W)$ and identifies disjoint agents groups, then \framework{} with its update equation (Eq. \ref{eq:update}) applied on group configurations will always have admissible heuristic values for all $\comp$. 
\end{theorem}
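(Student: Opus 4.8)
The plan is to prove admissibility by induction on the planning iterations, maintaining the stronger \emph{group-level} invariant that every stored heuristic penalty certifies a valid lower bound on its own group subproblem. For a set of agents $Gr$ at a group configuration $\comp_{Gr}$, define $h^*_{Gr}(\comp_{Gr})$ to be the optimal cost for exactly those agents to reach their goals while avoiding only \emph{intra}-group collisions (treating all other agents as absent). This quantity depends only on the locations of the agents in $Gr$, so a penalty created for $\comp_{Gr}$ is portable to any full configuration whose locations match. The invariant I would maintain is that each stored penalty satisfies $h_{BD}(\comp_{Gr}) + h_p(\comp_{Gr}) \le h^*_{Gr}(\comp_{Gr})$.

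First I would establish a \textbf{decomposition lemma}: for \emph{any} partition of the $N$ agents into disjoint groups $\{Gr_i\}$, we have $\sum_i h^*_{Gr_i}(\comp_{Gr_i}) \le h^*(\comp)$. This follows by taking a joint-optimal solution from $\comp$ and restricting it to each group: the restricted paths still avoid all intra-group collisions and so are feasible for the isolated group subproblem, their costs sum to exactly $h^*(\comp)$ since the groups partition the agents, and each group's restricted cost is at least $h^*_{Gr_i}(\comp_{Gr_i})$ by optimality of the latter. Specializing to singleton groups recovers the base-heuristic bound $h_{BD}(\comp) = \sum_i h^*_i(s_i) \le h^*(\comp)$, so $h$ is admissible before any penalties are added.

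Next I would run the induction over iterations of Algorithm~\ref{alg:windowed-framework}. The base case (no penalties) is immediate. For the inductive step, when a penalty is created for a group $Gr$, the new value is $U(\comp^{cur}_{Gr}, \comp_{Gr}^W) = \max(h(\comp^{cur}_{Gr}),\, c(\comp^{cur}_{Gr}, \comp_{Gr}^W) + h(\comp_{Gr}^W))$. The first argument is $\le h^*_{Gr}(\comp^{cur}_{Gr})$ by the inductive hypothesis. For the second, the crucial step is to invoke the \textbf{Disjoint Agent Groups} property: since no agent outside $Gr$ blocked any agent inside $Gr$ from a better path, the AG's joint-optimal choice restricted to $Gr$ is itself the argmin over the group subproblem, i.e. $c(\comp^{cur}_{Gr}, \comp_{Gr}^W) + h(\comp_{Gr}^W) = \min_{\comp_{Gr}'} c(\comp^{cur}_{Gr}, \comp_{Gr}') + h(\comp_{Gr}')$. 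Applying the inductive hypothesis $h(\comp_{Gr}') \le h^*_{Gr}(\comp_{Gr}')$ termwise together with the group-subproblem Bellman optimality equation $h^*_{Gr}(\comp^{cur}_{Gr}) = \min_{\comp_{Gr}'} c(\comp^{cur}_{Gr}, \comp_{Gr}') + h^*_{Gr}(\comp_{Gr}')$ then bounds the new value by $h^*_{Gr}(\comp^{cur}_{Gr})$. Hence the updated value, a max of two quantities each $\le h^*_{Gr}(\comp^{cur}_{Gr})$, preserves the group invariant. To finish, for an arbitrary $\comp$ the heuristic is evaluated as $h(\comp) = h_{BD}(\comp) + \sum_i h_p(\comp_{Gr_i})$ over a \emph{disjoint} set of matching stored groups; completing this to a full partition with zero-penalty singletons for uncovered agents and using $\sum_i h_{BD}(\comp_{Gr_i}) = h_{BD}(\comp)$, each term obeys the invariant, so $h(\comp) = \sum_i [h_{BD}(\comp_{Gr_i}) + h_p(\comp_{Gr_i})] \le \sum_i h^*_{Gr_i}(\comp_{Gr_i}) \le h^*(\comp)$ by the decomposition lemma, giving admissibility for all $\comp$.

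I expect the main obstacle to be the inductive step's reliance on decoupling the AG's joint optimization into a per-group optimization. Making the Disjoint Agent Groups definition precise enough to guarantee that the group's chosen transition truly attains the group subproblem's minimum of $c + h$ (rather than merely a collision-free move), and checking that this is consistent with defining $h^*_{Gr}$ as the fully-isolated group optimum used in the decomposition lemma, is where the argument is most delicate. By comparison, the decomposition/superadditivity lemma and the portability of penalties across full configurations are comparatively routine.
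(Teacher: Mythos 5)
Your proposal is correct and takes essentially the same route as the paper: reduce joint admissibility to group-level admissibility via superadditivity over disjoint groups (the paper's Lemma~\ref{thm:suppl-grouped-admissible}), then induct over planning iterations, using the AG's optimality together with the disjoint-group property to show each group update preserves admissibility. Your direct Bellman-style inductive step is the contrapositive of the paper's contradiction argument (a better group transition would either contradict AG optimality or force the group to be larger), with your version making explicit the isolated group optimum $h^*_{Gr}$ and the decomposition lemma that the paper asserts in a single sentence.
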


We calculate the joint heuristic value by summing up mutually exclusive group HPs as described in Section \ref{sec:apdx-which-hp-apply}. If we can prove that each group HPs values are admissible, then the sums of disjoint grouped heuristic values is also admissible as groups interacting can only increase the true cost-to-go. Thus we prove the following lemma.

\begin{lemma} \label{thm:suppl-grouped-admissible}
    Given a finite bidirectional graph and: (1) an initial Backward Dijkstra heuristic, (2) our AG picks $\argmin_{\comp^W} c(\comp, \comp^W) + h(\comp^W)$ and identifies disjoint agents groups, then \framework{} with its update equation (Eq. \ref{eq:update}) applied on group configurations will always have admissible heuristic values for all group configurations $\comp_{Gr}$. 
\end{lemma}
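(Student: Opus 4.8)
The plan is to prove the lemma by induction on the sequence of heuristic-penalty updates, taking as the induction hypothesis that every stored group value is admissible for its own group viewed as an \emph{isolated} MAPF subproblem. Concretely, for a group $Gr$ let $h^*(\comp_{Gr})$ denote the optimal cost for exactly the agents of $Gr$ to reach their goals while avoiding one another but ignoring all agents outside $Gr$; ``group admissible'' then means $h(\comp_{Gr}) \le h^*(\comp_{Gr})$. I would first record the reduction that makes this the right target: since the groups of a decomposition partition the agents and $h_{BD}$ is additive, $h(\comp) = \sum_i h(\comp_{Gr_i})$, and since forcing the groups to additionally avoid one another can only raise cost, $\sum_i h^*(\comp_{Gr_i}) \le h^*(\comp)$. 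Hence group admissibility of each $Gr_i$ immediately yields the joint admissibility of Theorem~\ref{thm:suppl-joint-admissible}.

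For the base case, before any update $h(\comp_{Gr}) = h_{BD}(\comp_{Gr}) = \sum_{i \in Gr} h_i^*(s_i)$. Each single-agent term $h_i^*(s_i)$ lower-bounds agent $i$'s cost in the harder problem where it must also avoid the remaining agents of $Gr$, so summing gives $h_{BD}(\comp_{Gr}) \le h^*(\comp_{Gr})$ and the base case holds.

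The heart of the argument is the inductive step. Assuming every group value is currently admissible, an application of Eq.~\eqref{eq:update} sets the value to $\max\big(h(\comp_{Gr}),\, c(\comp_{Gr}, \comp_{Gr}^W) + h(\comp_{Gr}^W)\big)$. The first argument is admissible by hypothesis, so it suffices to bound the second. Let $\comp_{Gr}^{opt}$ be the first configuration on an optimal isolated-group plan from $\comp_{Gr}$, so the Bellman optimality equation gives $c(\comp_{Gr}, \comp_{Gr}^{opt}) + h^*(\comp_{Gr}^{opt}) = h^*(\comp_{Gr})$. The Disjoint Agent Groups property guarantees that no agent outside $Gr$ blocked the group, so the AG's choice $\comp_{Gr}^W$ is an optimal first move for the group's own subproblem; in particular its minimization ranges over all intra-group collision-free successors and therefore includes $\comp_{Gr}^{opt}$, giving $c(\comp_{Gr}, \comp_{Gr}^W) + h(\comp_{Gr}^W) \le c(\comp_{Gr}, \comp_{Gr}^{opt}) + h(\comp_{Gr}^{opt})$. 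Applying the induction hypothesis $h(\comp_{Gr}^{opt}) \le h^*(\comp_{Gr}^{opt})$ and then the Bellman identity yields $c(\comp_{Gr}, \comp_{Gr}^W) + h(\comp_{Gr}^W) \le h^*(\comp_{Gr})$, so the updated value is at most $\max(h^*(\comp_{Gr}), h^*(\comp_{Gr})) = h^*(\comp_{Gr})$ and admissibility is preserved.

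I expect the main obstacle to be rigorously justifying the step ``$\comp_{Gr}^W$ is an optimal first move for the isolated group subproblem.'' This requires translating the AG's optimality over the \emph{full} configuration into per-group optimality, which is precisely the content of the Disjoint and Coupled Agents definitions: I must argue that independence means the group was free to choose any internally valid successor (so the Bellman-optimal $\comp_{Gr}^{opt}$ was genuinely available and not excluded by an external agent), and that no external agent could have let the group do strictly better. A secondary subtlety is that the value $h(\comp_{Gr}^W)$ used inside the update is itself assembled from stored sub-group penalties; I would note that this assembled value is admissible by the same relaxation argument used in the reduction above, so the induction hypothesis applies to it and the recursion is well-founded.
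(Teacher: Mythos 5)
Your proof is correct and follows essentially the same route as the paper's: induction over framework iterations, a base case from the additivity of the backward Dijkstra heuristic, and an inductive step that combines the AG's joint optimality with the disjoint-group property to conclude that $\comp_{Gr}^W$ is optimal for the group's own subproblem, which preserves admissibility. The only difference is presentational: the paper argues by contradiction (an inadmissible update would imply a better group move that either contradicts the optimal AG or forces the group to be larger, violating disjointness), whereas you establish the same per-group optimality directly and close with the Bellman inequality---the two obstacles you flag at the end are exactly the two cases the paper's contradiction argument handles.
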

\begin{proof}
    We prove this via induction. Our inductive hypothesis is that we have $h(\comp_{Gr}) \leq h^*(\comp_{Gr})$ across all group $Gr$ configurations $\comp_{Gr}$ and iterations of running our Windowed MAPF framework. 

    \textbf{Base case:} Our initial group heuristic values, the sum of each agent's backward Dijkstra distance, is admissible as interactions between agents can only increase the solution cost. Thus over all groups of agents, the joint heuristic value is admissible.

    \textbf{Inductive Step:} We assume that at some timestep $T$ at $\comp$, all the heuristic values are admissible. The AG picks a $\comp^W$ that minimizes $c(\comp, \comp^W) + h(\comp^W)$. Then for each group, we update its configuration (via our heuristic penalty) to $h(\comp_{Gr}) \gets U(\comp_{Gr}, \comp_{Gr}^W)$. 

    Suppose that there is a group $GrA$ at $\comp_{GrA}$ whose heuristic value became inadmissible after applying $U$. This can only occur when the chosen $\comp_{GrA}^W$ was not optimal and there must exist a different $\comp_{GrA}^*$ where $U(\comp_{GrA}, \comp_{GrA}^*)$ would be admissible. 
    However, since we are running on optimal AG, we would have replaced $\comp_{GrA}'$ with $\comp_{GrA}^*$. If this did not interact with any other agents, this would reduce our overall cost without changing any other agents' locations/actions and contradict the optimal AG. If it did interact with other agents, then $GrA$ would need to be larger and include these other agents, violating our definition of disjoint agent groups. Thus, neither of these are possible. This means that our heuristic value stays admissible for all disjoint agent groups after updating.    
\end{proof}

\section{SS-CBS Subtleties}
\subsection{Determining Which HPs to Apply} \label{sec:apdx-which-hp-apply}
As discussed earlier, give a CT node we need to detect heuristic conflicts. This can be done by going through all the HPs and checking if their agent locations match with the CT's configuration.

However, a non-trivial issue occurs as it is possible that multiple non-disjoint heuristic conflicts could be detected. For example, a configuration could have a heuristic penalties involving $(R_1,R_2)$ and another involving $(R_2,R_3)$. We cannot apply both as this would double count the heuristic penalty involving $R_2$.
Thus more broadly, we cannot apply multiple non-disjoint heuristic conflicts as if we were to resolve all of them, this will over-count the heuristic penalty.

Given a set of possible HPs, we could try to maximize the combination of these penalties such that chosen penalties do not have overlapping agents. 
This ends up being a maximum weighted disjoint set-cover problem where each heuristic penalties is a set and the set's weight is the penalty. We note that solving just disjoint set-cover is NP-Complete \cite{cardei2005disjoint_set_cover}, so solving this efficiently is hard. 
We thus instead do a greedy approximation and choose to apply the heuristic conflict with the highest penalty.

One important implementation note is that initially we greedily chose the highest HP that was mutually exclusive (i.e. did not share agents) with prior chosen HPs. This is \textit{wrong} and resulted in deadlock as the order in which HPs are chosen affects the solution. In certain deadlock locations, say $(R_1,R_2,R_3)$, SS-CBS would always first encounter an HP with $(R_1,R_2)$ and apply that. Then later in the same CT search, $(R_1,R_2,R_3)$ is encountered but we did not apply an HP as $R_1$ and $R_2$ are already used up. This resulted in SS-CBS repeatedly picking the deadlock location as even though $(R_1,R_2,R_3)$ accumulated large penalties, SS-CBS would only apply the $(R_1,R_2)$ penalty.

Thus, every time we compute heuristic penalties and conflicts, we recompute it from scratch while not violating constraints. This allows us to initially pick $(R_1,R_2)$ and then afterwards ``reassign" $(R_1,R_2,R_3)$ later on.

\subsection{Tie-Breaking} 
Given the 1-step plans, it is likely that there are many SS-CBS solutions with an equally good cost. We found it helpful to tiebreak by introducing agent ``priorities". The intuition is that given a symmetric situation of two agents in a hallway (where all adjacent configurations have the same summed $h$ value), we prefer one agent to ``push" the other agent away rather than oscillate in the middle.

Concretely, if two CT nodes have the same f-value, h-value, g-value (cost), and number of conflicts, we tie-break nodes by lexicographically comparing the agent's h-values. This means that given equally good options, we would prefer a solution that reduces the first lexicographically sorted agent's heuristic more than other options. 
The agent's lexicographical ordering can be thought of as its tiebreaking priority (i.e., the agent that comes first is the highest priority). 
We additionally found assigning random ``priorities" which are updated as in PIBT helped performance.

\subsection{SS-CBS Ablation Study}
A powerful technique for speeding up CBS is Enhanced CBS \cite{barer2014suboptimalecbs} which replaces the low-level and high-level optimal searches with bounded-suboptimal focal searches. Thus, an obvious idea to potentially improve SS-CBS is to introduce these focal searches. As stated in Section \ref{sec:conclusion}, our proof of completeness only applies to optimal AGs. 
For experimental curiosity, we tested out SS-CBS with a suboptimal high-level search (although we do not prove completeness for this suboptimal AG).

Figure \ref{fig:ablation} shows an ablation study of SS-CBS by varying the high-level suboptimality and tiebreaking mechanism. We test suboptimalities of $w_{so}=1, 1.05, 1.1, 1.5$ as well as explore two tie-breaking variants along with the none tie-breaking base version.
The none tie-breaking version will arbitrarily pick any node that shares the same f, h, g, and number of conflicts.
Using $w_{so}>1$ is not theoretically proven or disproven to be complete and is left for future work to study its theoretical properties.

The first tie-breaking method uses PIBT's priority scheme where agents initial priorities are proportional to their distance to the goal; agents further from the goal have higher priority than agents closer to the goal. During execution, the all agent's priority increases by 1 per timestep unless an agent is at its goal, where the priority is set to zero.
We explored another variant of this where we set the initial priorities randomly instead of based on distance, while keeping the same priority update scheme during execution.

In Figure \ref{fig:ablation}, colors denote different high-level suboptimalities while the linestyle denotes different tie-breaks. The ``-n" corresponds to no tie-breaking scheme, ``-d" to an initial Distance based priority, and ``-r" to an initial random based priority. The first row shows how increasing the suboptimality can improve the scalability of SS-CBS. 
The first row also highlights how including the tie-breaking mechanisms has a small but noticeable improvement over no tie-breaks. A close inspection shows that the random initialization tie-breaking helps more than the PIBT initialization tie-break.

The second row shows the number of iterations of execution required to reach the goal (equivalent to makespan). We see two possible patterns given a problem instance where multiple suboptimalities find a solution. In random-32-32-20 or warehouse, we generally see that increasing the suboptimality increases the number of iterations required compared to lower suboptimalities. Upon visual inspection, this seemed to occur when there were instances of deadlock that needed to be resolved via HPs. Higher suboptimalities meant that instead of trying to navigate outside of encountered HPs, SS-CBS several times chose to pick the same location and incur the higher HP cost. 
However, in the less congested maps ht\_chantry and den520d, we do not see as strong a pattern. Thus higher suboptimality seems to incur this drawback mainly in congested scenarios.
We see again that tie-breaking helps reduce the number of iterations required to solve problems.

The last two rows show how many HPs were created and how many HP conflicts were found in CTs across all search iterations. We see the number of HPs found in search increases exponentially as the number of agents increases, although the total number of HPs created does not increase as much. This highlights how SS-CBS spends more effort on resolving congestion via HPs as the number of agents (and therefore congestion) increases.

Figure \ref{fig:suboptimality} compares the scalability of windowed ECBS and SS-CBS as we increase the high-level suboptimality (the low-level suboptimality is kept at 1). Note the x-axis is not linearly scaled. The y-axis is the highest number of agents the method was able to solve at least 50\% of the instances within the 1 minute timeout. ECBS with window $W=1,2$ failed on the lowest tried number of agents so only $W=4,8,16$ are plotted. 

We see that on 3 of the 4 maps, increasing the suboptimality from 1 to 1.05 leads to an improvement but that higher suboptimalities does not for SS-CBS. We visually observed that SS-CBS would be fine waiting at a location with HP when the suboptimality increases (as the suboptimality factor allowed that solution). This shows how more clever ways of incorporating suboptimal search and real-time searches must be researched, similar to Weighted-LRTA* \cite{rivera2013weighted_real_time_search} which showed a non-trivial adjustment to LRTA* to work with sub-optimal searches. 

We also observe that windowed ECBS does not strictly improve as the suboptimality increases. First, the fact that windowed ECBS with $W=1,2$ failed for all suboptimalities shows how these methods fail due to their limited planning horizon. Similarly, a majority of the instances do not improve with suboptimalities higher than 1.05 due to a combination of deadlock and timing out.

\begin{figure*}[t]
    \centering
    \includegraphics[width=0.95\textwidth]{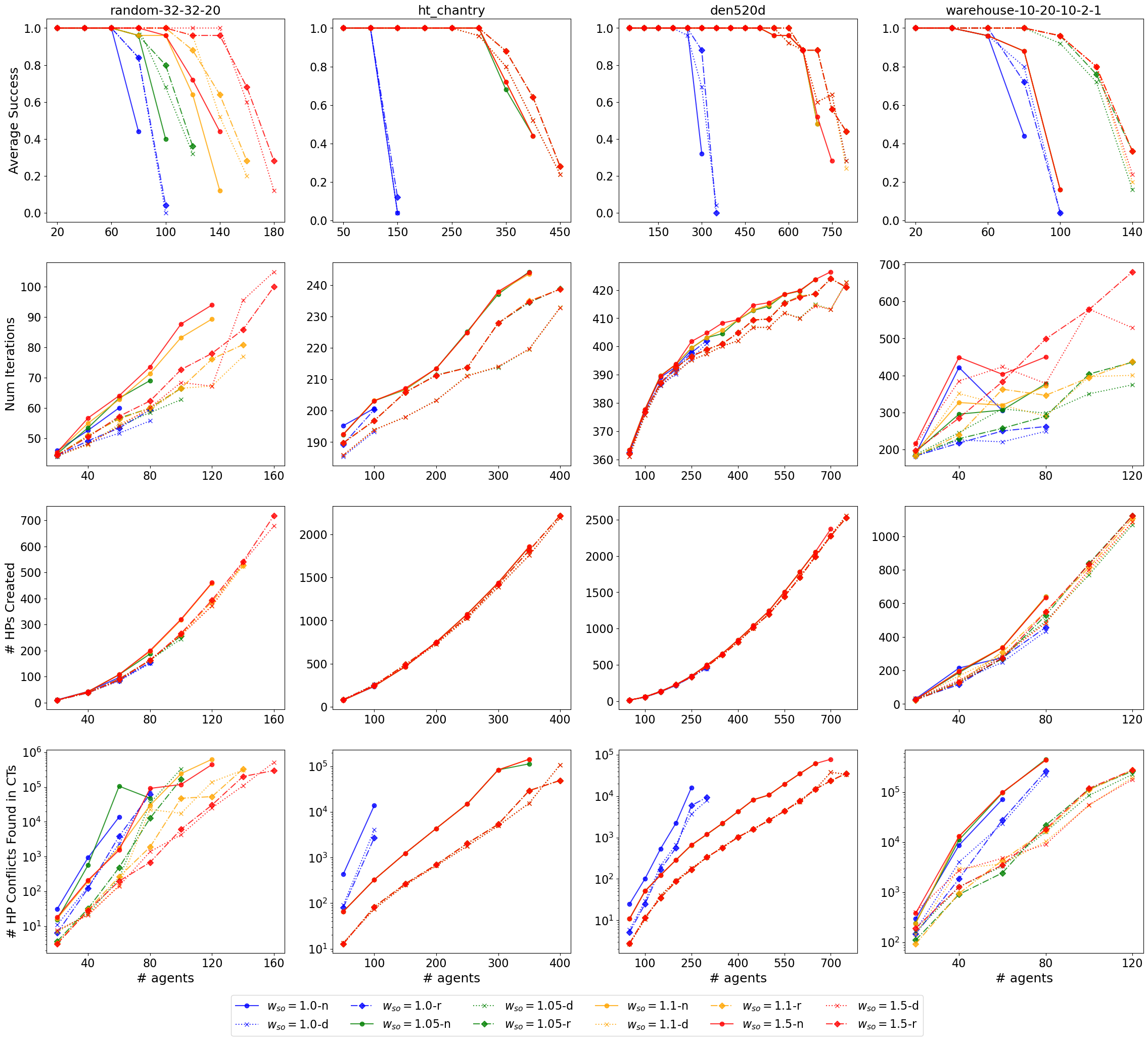}
    \caption{We plot statistics of SS-CBS with different high-level suboptimality $w_{so}$ (colored) and CT tie-breaking (line style). Note that completeness is only proven for an optimal SS-CBS (i.e., $w_{so}=1$ and not $w_{so}>1$).}
    \label{fig:ablation}
    \vspace*{-1em}
\end{figure*}

\begin{figure*}[t]
    \centering
    \includegraphics[width=0.95\textwidth]{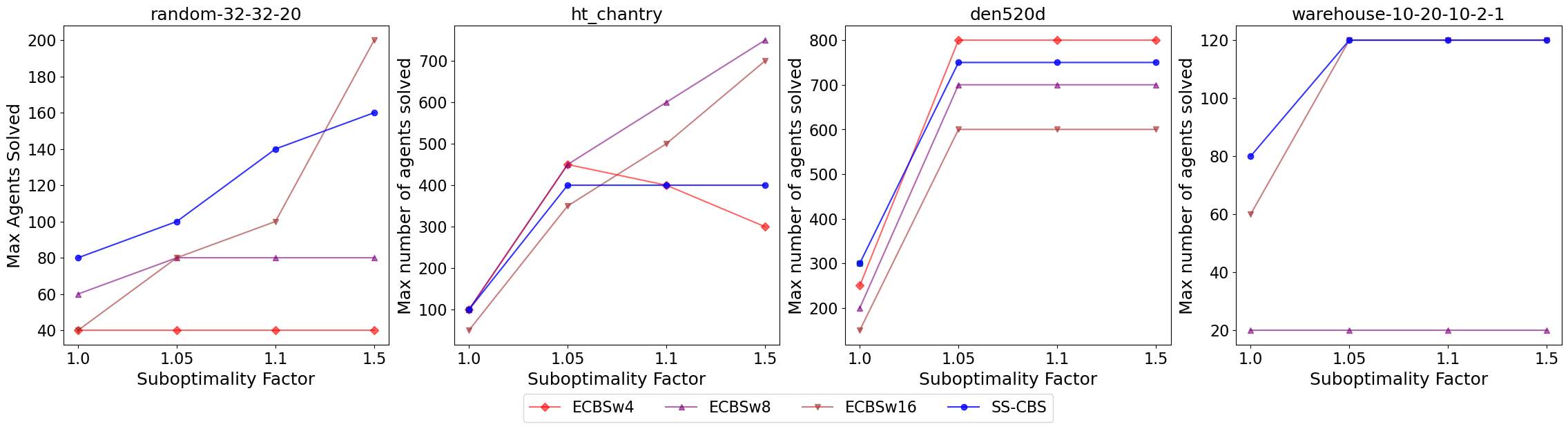}
    \caption{We compare the effects of increasing the high-level suboptimality factor on SS-CBS and windowed ECBS. The y-axis is the highest number of agents the method was able to solve at least 50\% of the instances within the 1 minute timeout. ECBS with window $W=1,2$ failed on the lowest tried number of agents so only $W=4,8,16$ are plotted.}
    \label{fig:suboptimality}
    \vspace*{-1em}
\end{figure*}

\end{document}